\newcommand{\myhref}[1]{\href{#1}{\url{#1}}}
\newcommand{\lt}[1]{\stackrel{#1}{\longrightarrow}}
\newcommand{\implies}[0]{\Rightarrow}
\newcommand{\cantell}[0]{\mathit{CanTell}}
\newtheorem{theorem}{Theorem}
\newtheorem{lemma}{Lemma}
\newtheorem{definition}{Definition}
\title{Decentralized Observation of Discrete-Event Systems:  At Least One Can Tell\thanks{
This work has been supported by NSF SaTC award CNS-1801546 and by a Discovery Grant from the Natural Sciences and Engineering Research Council of Canada (NSERC).}}
\author[1]{Stavros Tripakis}
\author[2]{Karen Rudie}
\affil[1]{Khoury College of Computer Sciences, Northeastern University}
\affil[2]{Department of Electrical and Computer Engineering, Queen's University}
\begin{document}
\maketitle

\section{Introduction}

The move towards multi-agent, autonomous systems in daily living, including decentralized micro-grids for power, autonomous robots, self-driving cars, and smart small appliances and electronics, requires truly decentralized decision-making, obviating the need for a centralized decision point or fusion rule or coordinator. In this work we explore agents who make decentralized observations and we examine under what conditions the agents' decisions suffice to determine if some behavior is legal or not.

In particular, we introduce a new decentralized observation condition which we call {\em at least one can tell} and which attempts to capture the idea that for any possible behavior that a system can generate, at least one decentralized observation agent can tell whether that behavior was ``good'' or ``bad'', for given formal specifications of ``good'' and ``bad''.
We provide several equivalent formulations of the {\em at least one can tell} condition, and we relate it to (and show that it is different from) previously introduced {\em joint observability}~\cite{TripakisCDC01,TripakisIPL}. In fact, contrary to joint observability which is undecidable~\cite{TripakisCDC01,TripakisIPL}, we show that the {\em at least one can tell} condition is decidable. We also show that when the condition holds, finite-state decentralized observers exist.

\section{Background}

The setting of this work is that of systems whose behavior can be thought of as sequences of actions or events, also referred to as {\em discrete-event systems} (DESs) and modelled as automata or languages. For more details on automata theory and languages, see the classic book by Hopcroft and Ullman \cite{HopcroftUllman79} and for details on discrete-event systems the book by Wonham and Cai \cite{WonhamCai}.

\subsection{Preliminaries}
\label{sec_preliminaries}

A finite set of {\em letters} $\Sigma$ is called an {\em alphabet}.
The set of all finite sequences over $\Sigma$, also called {\em words}, is denoted by $\Sigma^*$.
 The empty word (i.e., the sequence of length $0$) is denoted $\epsilon$.
Given a subalphabet $\Sigma_1 \subseteq \Sigma$, the {\em projection function from $\Sigma$ onto $\Sigma_1$} is the function $P_1 : \Sigma^* \to \Sigma_1^*$ that removes from words in $\Sigma$ all letters except those in $\Sigma_1$.
For example, if $\Sigma=\{a,b\}$ and $\Sigma_1=\{a\}$, then $P_1(abbab) = aa$ and $P_1(bb)=\epsilon$. 
A {\emph language} $L$ over $\Sigma$ is a subset of all possible words and will be used to denote the behavior of some system or process, where elements of $\Sigma$ are events. In that context, the projection function will be used to capture observations of the behavior of that system or process.

A {\em finite automaton over $\Sigma$} is a tuple $A=(\Sigma,Q,Q_0,F,\Delta)$, where $Q$ is a finite set of {\em states};
$Q_0\subseteq Q$ is a set of {\em initial} states; $F\subseteq Q$ is a set of {\em final} or {\em accepting} states; and
$\Delta\subseteq Q\times\Sigma\times Q$ is the {\em transition relation}.
blue{
	When the relation $\Delta$ can be represented as a function $\Delta: Q \times \Sigma \to Q$, we say that $A$ is a {\em deterministic finite automaton} (DFA) and otherwise it is a {\em nondeterministic finite automaton} (NFA).  If we think of elements of $\Sigma$ as events then a DFA is one where at a given state, an event leads to only one other state, whereas an NFA allows for the same event from a state to lead to multiple other states.
}
A {\em run} of $A$ is a finite sequence of states and transitions $q_0 \lt{a_1} q_1 \cdots \lt{a_n} q_n$, 
for $n\ge 0$, such that $q_0\in Q_0$ and 
$(q_i,a_{i+1},q_{i+1}) \in \Delta$ for all $i=0,...,n-1$.
The run is called {\em accepting} if $q_n\in F$.
The run is said to {\em generate} the word $a_1\cdots a_n$.
A word  $\rho\in\Sigma^*$ is said to be {\em accepted} or {\em recognized} by $A$ if there is an accepting run of $A$ generating $\rho$. The language accepted or recognized by $A$ is the set of all words in $\Sigma^*$ accepted by $A$.

	  Since projection can be used to capture the observations of a system, it is natural to ask whether the projection of a language recognized by a finite automaton $A$ can also be recognized by a finite automaton. In fact, the answer is ``yes''. Informally the process is as follows.  Let $A$ be an automaton over $\Sigma$ and let $\Sigma_1\subseteq\Sigma$. We wish to find an automaton $A'$ over $\Sigma_1$ that recognizes $P_1(L(A))$, where $P_1 : \Sigma^* \to \Sigma_1^*$,
	  and for a language $L\subseteq\Sigma^*$, $P_1(L) = \{ P_1(\rho) \mid \rho \in L\}$. We start by replacing all alphabet labels in $A$ by $\varepsilon$.  This results in an NFA $A'$ with $\varepsilon$-transitions. $A'$ already recognizes $P_1(L(A))$ and is a solution to our problem, provided having a non-deterministic automaton is not a problem. If $A'$ is required to be deterministic, we can use the standard subset construction of~\cite{HopcroftUllman79} to convert an NFA with $\varepsilon$-transitions into a DFA. 
	  
	 	 The {\em inverse projection} $P_1^{-1}(L) : \Sigma_1^* \to \Sigma$ of a given language returns all words whose projection is in that language, i.e., $P_1^{-1}(L) = \{ \rho \mid  P_1(\rho) \in L\}$.  Inverse projection allows us to speak about the words an agent thinks could have been produced based on the agent's observations.  Inverse projection of a regular language can also be recognized by a finite automaton.  For $\Sigma_1 \subseteq \Sigma$, let $A$ be an automaton over $\Sigma_1$. To find an automaton that recognizes $P_1^{-1}(L(A))$, add self-loops of all events in $\Sigma \setminus \Sigma_1$ at all states of $A$.

\subsection{Previous work: joint observability (JO)}
\label{sec_jo}

Previous work~\cite{TripakisCDC01,TripakisIPL}  introduced the following
definition of {\em joint observability}:

\begin{definition}[Joint Observability]
Given alphabet $\Sigma$ and subalphabets $\Sigma_i\subseteq\Sigma$,
for $i=1,...,n$, and given regular languages $K\subseteq L\subseteq\Sigma^*$,
$K$ is called {\em jointly observable} if there exists a total function
$$
f : \Sigma_1^*\times\cdots\times\Sigma_n^* \to \{0,1\}
$$
such that
$$
\forall\rho\in L : \rho\in K \iff f(P_1(\rho),...,P_n(\rho)) = 1
$$
where $P_i:\Sigma^*\to\Sigma_i^*$ is the projection function onto $\Sigma_i$, for $i=1,...,n$.
\end{definition}

In the definition above $L$ models the plant, and $K$ models the good behaviors
of the plant. We want to know whether the behavior of the plant was good or
bad. But we can't observe the behavior of the plant directly, so we have to
rely on the decentralized projections. 
Joint observability says that whether a behavior is good can be completely determined based only on the decentralized observations of agents.

The following is a necessary and sufficient condition for joint observability:
\begin{theorem}[\cite{TripakisCDC01,TripakisIPL}]
\label{thm_JO}
$K$ is jointly observable iff
\begin{equation}
\label{eq_JO}
\not\exists\rho,\rho'\in L : \rho\in K\land \rho'\in L-K\land
	\forall i\in\{1,...,n\} : P_i(\rho)=P_i(\rho')
\end{equation}
\end{theorem}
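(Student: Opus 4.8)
The plan is to prove the two implications separately, reading condition~(\ref{eq_JO}) as the assertion that $L$ contains no pair $\rho,\rho'$ that are ``observationally indistinguishable'' to every agent (i.e.\ $P_i(\rho)=P_i(\rho')$ for all $i$) yet straddle the boundary of $K$, one inside and one outside. The forward direction is essentially immediate from the definition of $f$, whereas the reverse direction requires me to exhibit a suitable $f$; the latter is where the real content lies, though it too is short.

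For the forward direction I would argue by contraposition. Suppose condition~(\ref{eq_JO}) fails, so there exist $\rho\in K$ and $\rho'\in L-K$ with $P_i(\rho)=P_i(\rho')$ for every $i\in\{1,\dots,n\}$. Then the tuples $(P_1(\rho),\dots,P_n(\rho))$ and $(P_1(\rho'),\dots,P_n(\rho'))$ coincide, so any total function $f$ must assign them the same value. But $\rho\in K$ forces $f(P_1(\rho),\dots,P_n(\rho))=1$ while $\rho'\in L-K$ forces $f(P_1(\rho'),\dots,P_n(\rho'))=0$, a contradiction. Hence no function can witness joint observability; equivalently, joint observability implies~(\ref{eq_JO}).

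For the reverse direction I would assume~(\ref{eq_JO}) and construct $f$ explicitly, setting $f(w_1,\dots,w_n)=1$ exactly when there is some $\rho\in K$ with $P_i(\rho)=w_i$ for all $i$, and $f(w_1,\dots,w_n)=0$ otherwise. This is manifestly a total function on $\Sigma_1^*\times\cdots\times\Sigma_n^*$, so it remains only to verify the defining equivalence for each $\rho\in L$. If $\rho\in K$, then $\rho$ itself witnesses $f(P_1(\rho),\dots,P_n(\rho))=1$. If $\rho\in L-K$, then $f(P_1(\rho),\dots,P_n(\rho))=1$ would supply some $\rho''\in K\subseteq L$ with $P_i(\rho'')=P_i(\rho)$ for all $i$, so the pair $(\rho'',\rho)$ would instantiate the forbidden pattern of~(\ref{eq_JO}); hence $f$ must return $0$. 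I expect the only point needing care to be this case split, namely confirming that the witness produced in the $\rho\in L-K$ subcase genuinely violates~(\ref{eq_JO}); otherwise the construction goes through directly and, notably, uses nothing about the regularity of $K$ and $L$.
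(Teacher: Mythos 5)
Your proof is correct. Note that the paper itself does not prove Theorem~\ref{thm_JO}; it imports the result from the cited references \cite{TripakisCDC01,TripakisIPL}. Your argument is the standard one for this equivalence: the contrapositive for the forward direction, and for the converse the canonical witness $f$ that answers $1$ exactly when the observation tuple is consistent with some word of $K$, with the violation of~(\ref{eq_JO}) ruling out a false positive on $L-K$. The one point worth making explicit is that the witness $\rho''\in K$ also lies in $L$ (via $K\subseteq L$), so the pair $(\rho'',\rho)$ really does instantiate the forbidden pattern; you implicitly use this and it is fine. As you observe, regularity plays no role here.
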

\noindent
Informally, Condition~(\ref{eq_JO}) says that whether strings are good or not cannot possibly be determined if there are two strings---one good and one bad---that look the same to all agents.

We call Condition~(\ref{eq_JO}) the JO condition, or simply JO.
It turns out that JO is undecidable~\cite{TripakisCDC01,TripakisIPL}.

\section{Decentralized Observability: {\em At Least One Can Tell}}

Even though JO has been used in~\cite{TripakisCDC01,TripakisIPL} as a stepping stone to showing the undecidability of decentralized supervisory control problems, JO itself is not really decentralized,
because it requires a centralized decision point $f$. In this paper, we investigate ``truly decentralized'' observation conditions. We begin by a definition that 
tries to capture the {\em at least one can tell} property: namely, that there
is no centralized decision point, but for every behavior of the plant, at
least one of the decentralized observers can tell whether this behavior
is good or bad, i.e., whether it belongs in $K$ or not.
We call this condition {\em OCT}.
We compare OCT to JO, and we examine equivalent formulations of OCT.

\subsection{The OCT condition}

\begin{definition}[At Least One Can Tell]
	\label{def_oct}
Given alphabet $\Sigma$ and subalphabets $\Sigma_i\subseteq\Sigma$,
for $i=1,...,n$, and given regular languages $K\subseteq L\subseteq\Sigma^*$,
	the {\em at least one can tell} condition (or simply OCT)
	is defined as follows:
\begin{equation}
\label{eq_oct}
\forall\rho\in L : \exists i\in\{1,...,n\} : \cantell(i,\rho) 
\end{equation}
where $\cantell$ is defined as follows:
\begin{eqnarray}
\cantell(i,\rho) & = &
	\rho\in K \implies \not\exists\rho'\in L-K : P_i(\rho)=P_i(\rho') \label{eqnoctgood}\\
	& & \land \nonumber \\
	& & \rho\in L-K \implies \not\exists\rho'\in K : P_i(\rho)=P_i(\rho') \label{eqnoctbad}
\end{eqnarray}
\end{definition}

\subsection{The negation of OCT}

In view of what follows, it is useful to state explicitly the negation of the OCT condition:

\begin{lemma}
\label{lem_negationOCT}
The negation of OCT is equivalent to:
\begin{eqnarray}
	& \Big( \exists\rho\in K : \forall i\in\{1,...,n\} : \exists\rho_i\in L-K : P_i(\rho)=P_i(\rho_i) \Big) \label{eqnnegOCTgood}\\
	& \lor \nonumber\\
	& \Big( \exists\rho\in L-K : \forall i\in\{1,...,n\} : \exists\rho_i\in K : P_i(\rho)=P_i(\rho_i) \Big) \label{eqnnegOCTbad}
\end{eqnarray}
\end{lemma}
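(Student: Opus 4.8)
The plan is to start from the direct negation of OCT, namely $\exists\rho\in L:\forall i\in\{1,\dots,n\}:\neg\cantell(i,\rho)$, to simplify the inner predicate $\neg\cantell(i,\rho)$ by elementary propositional reasoning, and then to split on whether the witness $\rho$ lies in $K$ or in $L-K$. Since the claim is a logical equivalence between two first-order formulas, the whole argument is a chain of equivalences, and the substance lies entirely in bookkeeping the quantifiers and the two cases.

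First I would compute $\neg\cantell(i,\rho)$. Writing $\cantell(i,\rho)$ as a conjunction of two implications $(A\implies B)\land(C\implies D)$, with $A \equiv \rho\in K$, $B \equiv \not\exists\rho'\in L-K:P_i(\rho)=P_i(\rho')$, $C\equiv\rho\in L-K$, and $D\equiv\not\exists\rho'\in K:P_i(\rho)=P_i(\rho')$, De Morgan gives $\neg\cantell(i,\rho)\equiv (A\land\neg B)\lor(C\land\neg D)$, that is
\[
\big(\rho\in K\land\exists\rho'\in L-K:P_i(\rho)=P_i(\rho')\big)\ \lor\ \big(\rho\in L-K\land\exists\rho'\in K:P_i(\rho)=P_i(\rho')\big).
\]

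Next I would use the hypothesis $K\subseteq L$, which makes $K$ and $L-K$ a disjoint partition of $L$, so that every $\rho\in L$ satisfies exactly one of $\rho\in K$, $\rho\in L-K$. Conditioning on this dichotomy collapses one disjunct of $\neg\cantell(i,\rho)$ in each case: if $\rho\in K$ then $\neg\cantell(i,\rho)$ reduces to $\exists\rho'\in L-K:P_i(\rho)=P_i(\rho')$, and if $\rho\in L-K$ it reduces to $\exists\rho'\in K:P_i(\rho)=P_i(\rho')$. Crucially this simplification does not depend on $i$ --- whether $\rho$ is good or bad is fixed the moment $\rho$ is chosen --- so it may be carried out uniformly under the quantifier $\forall i$.

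Finally I would distribute the outer $\exists\rho\in L$ over the partition, rewriting $\exists\rho\in L:\Phi(\rho)$ as $(\exists\rho\in K:\Phi(\rho))\lor(\exists\rho\in L-K:\Phi(\rho))$, and substitute the case-specific form of $\neg\cantell$ into each branch to recover exactly (\ref{eqnnegOCTgood})$\lor$(\ref{eqnnegOCTbad}). The only point needing a little care is quantifier scope: because the existential witness sits inside $\forall i$, it may depend on $i$, which is why the statement is phrased with an indexed witness $\rho_i$ rather than a single $\rho'$; I would make this dependence explicit when renaming $\rho'$ to $\rho_i$. Beyond that the reasoning is purely propositional, so I expect no genuine obstacle --- the ``hard part'' is merely presenting the case split and the quantifier distribution cleanly.
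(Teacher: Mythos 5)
Your proposal is correct and follows essentially the same route as the paper: negate OCT to get $\exists\rho\in L:\forall i:\neg\cantell(i,\rho)$, split the outer existential over the partition $L=K\cup(L-K)$, and unfold $\neg\cantell$ in each branch. The paper's proof is just a terser version of this; your extra care about the De Morgan step and the $i$-dependence of the witness $\rho_i$ is sound but not a different argument.
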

\begin{proof}
	The negation of OCT is:
$$
\exists\rho\in L : \forall i\in\{1,...,n\} : \neg\cantell(i,\rho)
$$
Since $\rho\in L$ is equivalent to $\rho\in K \lor \rho\in L-K$, the
above is equivalent to:
$$
\Big(\exists\rho\in K : \forall i\in\{1,...,n\} : \neg\cantell(i,\rho) \Big)
\lor
\Big(\exists\rho\in L-K : \forall i\in\{1,...,n\} : \neg\cantell(i,\rho) \Big)
$$
which is by definition of $\cantell$ equivalent to the disjunction of (\ref{eqnnegOCTgood}) and (\ref{eqnnegOCTbad}).
\end{proof}

\subsection{Comparison of OCT with JO}

It is easy to show that OCT is a stronger condition than JO:

\begin{theorem}
	OCT implies JO.
\end{theorem}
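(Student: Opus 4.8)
The plan is to prove the contrapositive: assuming JO fails, I will show that OCT fails, invoking the explicit form of the negation of OCT supplied by Lemma~\ref{lem_negationOCT}.

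By Theorem~\ref{thm_JO}, the failure of JO means that Condition~(\ref{eq_JO}) is violated, i.e.\ there exist words $\rho\in K$ and $\rho'\in L-K$ with $P_i(\rho)=P_i(\rho')$ for every $i\in\{1,\dots,n\}$. I would fix this pair and then exhibit the first disjunct~(\ref{eqnnegOCTgood}) of the negation of OCT. Concretely, take the good word $\rho\in K$ as the witness required in~(\ref{eqnnegOCTgood}), and for every index $i$ set $\rho_i := \rho'$. Since $\rho'\in L-K$ and $P_i(\rho)=P_i(\rho')$ holds for all $i$ by assumption, the single word $\rho'$ simultaneously discharges the per-agent existential quantifier ``$\exists\rho_i\in L-K : P_i(\rho)=P_i(\rho_i)$'' for each $i$. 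This establishes~(\ref{eqnnegOCTgood}), hence the disjunction of~(\ref{eqnnegOCTgood}) and~(\ref{eqnnegOCTbad}), which by Lemma~\ref{lem_negationOCT} is exactly the negation of OCT.

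Alternatively, I could argue directly without the negation lemma: assume OCT and suppose, for contradiction, that some pair $\rho\in K$, $\rho'\in L-K$ agrees on all projections. Applying~(\ref{eq_oct}) to $\rho\in L$ yields an index $i$ with $\cantell(i,\rho)$; since $\rho\in K$, clause~(\ref{eqnoctgood}) asserts that no word of $L-K$ shares $\rho$'s $i$-projection, contradicting the existence of $\rho'$. Either route is short.

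There is essentially no hard step here; the only point that deserves care is recognizing the asymmetry between the two failure conditions. The negation of JO provides a single, agent-independent witness $\rho'$ that looks identical to $\rho$ under every projection, whereas the negation of OCT only asks, for each agent $i$ separately, for a witness $\rho_i$ that may vary with $i$. The proof succeeds precisely because a uniform witness is a special case of an agent-dependent family: the same $\rho'$ can be reused as every $\rho_i$. This is also the conceptual reason OCT is stronger than JO---defeating OCT is easier, since the confusing bad words are allowed to differ from one observer to the next.
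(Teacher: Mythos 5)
Your proposal is correct and follows essentially the same route as the paper: the contrapositive argument that instantiates every $\rho_i$ with the single witness $\rho'$ from the failure of JO to establish~(\ref{eqnnegOCTgood}) and then invokes Lemma~\ref{lem_negationOCT}. The additional direct argument and the remark on the asymmetry of witnesses are fine but not needed.
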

\begin{proof}
By contrapositive.
Suppose JO doesn't hold. Then, by Theorem~\ref{thm_JO}, there exist $\rho,\rho'$ such that
$\rho\in K$, $\rho'\in L-K$, and for all $i=1,...,n$, $P_i(\rho)=P_i(\rho')$.
We will show that (\ref{eqnnegOCTgood}) holds. Indeed, this is done by setting $\rho_i$ to $\rho'$ for each $i=1,...,n$.
By Lemma~\ref{lem_negationOCT}, (\ref{eqnnegOCTgood}) implies the negation of OCT.
\end{proof}

The two conditions are {\em not} equivalent, however:

\begin{theorem}
\label{thm_JO_notimply_OCT}
JO does not generally imply OCT.
\end{theorem}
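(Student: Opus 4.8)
The plan is to prove the statement by exhibiting a single concrete counterexample: an instance of $\Sigma$, the subalphabets $\Sigma_i$, and regular languages $K\subseteq L$ for which JO holds but OCT fails. The guiding intuition is the asymmetry between the two conditions made visible by Theorem~\ref{thm_JO} and Lemma~\ref{lem_negationOCT}. By~(\ref{eq_JO}), JO fails only when a \emph{single} bad string $\rho'$ is confused with a good string $\rho$ by \emph{all} agents simultaneously, i.e.\ $P_i(\rho)=P_i(\rho')$ for every $i$. By contrast, OCT fails through~(\ref{eqnnegOCTgood}) as soon as there is a good string $\rho$ for which \emph{each} agent $i$ has \emph{its own} bad witness $\rho_i$ with $P_i(\rho)=P_i(\rho_i)$, and these witnesses may differ from agent to agent. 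So the target is an instance in which every agent individually confuses some good string with some bad string, yet no single bad string fools everyone at once.

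Concretely, I would take $n=2$ agents, $\Sigma=\{a,b\}$, and the complementary observation alphabets $\Sigma_1=\{a\}$ and $\Sigma_2=\{b\}$, so agent $1$ sees only the $a$'s and agent $2$ only the $b$'s. I would then set $L=\{a,b,ab\}$ and $K=\{ab\}$, whence $L-K=\{a,b\}$; all three languages are finite, hence regular, and $K\subseteq L$ as required. To verify JO via~(\ref{eq_JO}), note that the only good string is $ab$, with $P_1(ab)=a$ and $P_2(ab)=b$; the bad string $a$ agrees with $ab$ on agent $1$ but not on agent $2$ (since $P_2(a)=\epsilon\neq b$), while the bad string $b$ agrees on agent $2$ but not on agent $1$, so no bad string matches $ab$ on \emph{both} projections and JO holds. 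To show OCT fails, I would instantiate~(\ref{eqnnegOCTgood}) with $\rho=ab\in K$: agent $1$ is fooled by $\rho_1=a\in L-K$ because $P_1(a)=a=P_1(ab)$, and agent $2$ is fooled by $\rho_2=b\in L-K$ because $P_2(b)=b=P_2(ab)$. By Lemma~\ref{lem_negationOCT}, the truth of~(\ref{eqnnegOCTgood}) entails the negation of OCT.

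The calculations themselves are routine; the only genuinely substantive part is the design of the instance, and this is also where the one real obstacle lies. The good string $ab$ must decompose, under the two complementary projections, into observations each of which is separately reproducible by a distinct bad string, while the two bad strings $a$ and $b$ can never be ``glued'' into one bad string that looks like $ab$ to both agents at the same time. Keeping $K$ as the single string $ab$ and $L-K$ as its two single-letter ``shadows'' is exactly what achieves this separation, and the main thing I would take care to check is that no other good/bad pair slips in to violate JO.
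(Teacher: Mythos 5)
Your proof is correct. You take the same overall route as the paper---a concrete counterexample over $\Sigma=\{a,b\}$ with the complementary views $\Sigma_1=\{a\}$, $\Sigma_2=\{b\}$, verifying JO directly from Condition~(\ref{eq_JO}) and refuting OCT via Lemma~\ref{lem_negationOCT}---but your instance is different and in some ways cleaner. The paper uses the infinite languages $K=(ab)^*$, $L=(ab)^*b^*$, argues JO by a counting argument (equal numbers of $a$'s and $b$'s characterize membership in $K$), and falsifies OCT through disjunct~(\ref{eqnnegOCTbad}), i.e.\ a bad string $\rho=abb$ that each agent confuses with its own good witness. You instead use the finite languages $K=\{ab\}$, $L=\{a,b,ab\}$, which lets you verify JO by exhaustively checking the only two candidate pairs, and you falsify OCT through disjunct~(\ref{eqnnegOCTgood}), i.e.\ the good string $ab$ whose two projections are each separately reproduced by a distinct bad single-letter string. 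Your example is arguably the more elementary one to check; the paper's has the pedagogical virtue of illustrating that JO can hold for a nontrivial, infinite reason (a global count no single agent can perform) while OCT still fails. Both are complete proofs of the theorem.
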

\begin{proof}
Consider the following example:
$\Sigma=\{a,b\}$, $\Sigma_1=\{a\}$, $\Sigma_2=\{b\}$,
$K=(a b)^*$ and $L=(a b)^* b^*$.
We have two observers, and in this case JO holds:
if the numbers of $a$'s and $b$'s are equal, we know that the word was in $K$,
otherwise there are more $b$'s than $a$'s, and the word must have been in
$L-K$. But the decentralized observers alone cannot tell:\footnote{
	We found this out
	thanks to a student in the DES school, Martijn Goorden, who cleverly
observed that in an example presented in Tripakis' lecture, there doesn't
seem to be a way for {\em at least one observer to tell}, yet
joint observability holds. We thank Martijn for this observation.
	}
one observer sees a bunch of $a$'s, the other a bunch of $b$'s. There is
no way of comparing the number of $a$'s and $b$'s (since there is no centralized
decision point). So OCT shouldn't hold here.
Indeed, take $\rho=abb$, $\rho_1=ab$, and $\rho_2=abab$. Note that $\rho\in L-K$, $\rho_1\in K$ and $\rho_2\in K$.
We will show that (\ref{eqnnegOCTbad}) holds. 
Indeed, $P_1(\rho)=P_1(\rho_1)=a$, so observer 1 cannot tell.
Similarly, $P_2(\rho)=P_2(\rho_2)=bb$, so observer 2 cannot tell.
By Lemma~\ref{lem_negationOCT}, (\ref{eqnnegOCTbad}) implies that OCT does not hold.
\end{proof}

The distinction between JO and OCT can also be seen by noticing that a counterexample to JO is a pair $\rho, \rho'$, such that $P_i(\rho)=P_i(\rho')$ for all $i=1,...,n$ (Theorem~\ref{thm_JO}), whereas as Lemma~\ref{lem_negationOCT} indicates and as we saw in the proof of Theorem~\ref{thm_JO_notimply_OCT}, a counterexample to OCT is a set of $n+1$ words, $\rho, \rho_1,...,\rho_n$, such that for each $i=1,...,n$, we have $P_i(\rho)=P_i(\rho_i)$.
Crucially, $\rho'$ is the same word which must ``match'' $\rho$ in every projection, whereas the words $\rho_1,...,\rho_n$ need not be the same.

\subsection{Functional characterization of OCT}
\label{sec_atleastonecantell}

Definition~\ref{def_oct} captures our intuition about the {\em at least one can tell} property, namely, that at least one of the decentralized agents can be sure whether the behavior of the plant was good or bad. However, Definition~\ref{def_oct} does not make explicit the existence of such decentralized observation agents. We rectify this by giving the definition that follows, which we then prove equivalent to OCT.

\begin{definition}[Alternative OCT]
\label{def_altoct}
We say that the {\em alternative OCT} (ALTOCT) condition holds iff there exist total functions $f_i:\Sigma_i^*\to\{Y,N,U\}$, such that
\begin{eqnarray}
	\nonumber
\big( \forall\rho\in L : \exists i\in\{1,...,n\} : & &
	\rho\in K \implies f_i(P_i(\rho))=Y\\
	\nonumber
	& & \land \\
	\nonumber
	& & \rho\in L-K \implies f_i(P_i(\rho))=N \big) \\
\label{eq_atleastonecantell}
	& \land & \\
	\nonumber
\big( \forall\rho\in L : \forall i\in\{1,...,n\} : & & 
				f_i(P_i(\rho))=Y \implies \rho\in K \\
				\nonumber
			& & \land \\
			\nonumber
			& & f_i(P_i(\rho))=N \implies \rho\in L-K \big)
\end{eqnarray}
\end{definition}

$Y$ means that $f_i$ knows that $\rho$ was in $K$, $N$ means that
$f_i$ knows that $\rho$ was not in $K$, and $U$ means $f_i$ doesn't know.
The bottom, $\forall\rho ... \forall i ...$ part says that no observer can
``lie'', namely, if it says $Y$ then it's really the case that $\rho\in K$,
and if it says $N$ then it's really the case that $\rho\in L-K$.
The top, $\forall\rho ... \exists i ...$ part says that at least one observer 
can tell.

\begin{theorem}
\label{thm_aloct}
The ALTOCT condition is equivalent to the OCT condition.
\end{theorem}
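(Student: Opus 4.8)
The plan is to prove the two implications separately, and the key asymmetry I expect is that one clause of ALTOCT is essentially free while the other carries all the content.

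For the direction ALTOCT $\implies$ OCT, I would assume the existence of the functions $f_i$ and fix an arbitrary $\rho\in L$. The top (``at least one can tell'') clause of (\ref{eq_atleastonecantell}) hands me an index $i$ that outputs $Y$ on $P_i(\rho)$ when $\rho\in K$, or outputs $N$ when $\rho\in L-K$. I claim this same $i$ witnesses $\cantell(i,\rho)$. Consider the case $\rho\in K$, so $f_i(P_i(\rho))=Y$. If some $\rho'\in L-K$ had $P_i(\rho')=P_i(\rho)$, then $f_i(P_i(\rho'))=Y$ as well, and the bottom (``no lying'') clause applied to $\rho'$ would force $\rho'\in K$, a contradiction. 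Hence no such $\rho'$ exists, which is exactly the first conjunct of $\cantell$; the second conjunct is vacuous since $\rho\notin L-K$. The case $\rho\in L-K$ is symmetric, using $f_i(P_i(\rho))=N$ and the $N$-part of the soundness clause. Thus OCT holds.

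For the converse OCT $\implies$ ALTOCT, I would give an explicit construction of the $f_i$. The natural choice is to let each observer declare $Y$ or $N$ only when its observation is unambiguous: define $f_i(s)=Y$ if $s\in P_i(K)\setminus P_i(L-K)$, $f_i(s)=N$ if $s\in P_i(L-K)\setminus P_i(K)$, and $f_i(s)=U$ otherwise. These are total functions into $\{Y,N,U\}$. Verifying the bottom clause is direct and, notably, does not use OCT at all: if $f_i(P_i(\rho))=Y$ for $\rho\in L$, then $P_i(\rho)\notin P_i(L-K)$, so $\rho$ cannot lie in $L-K$ and must be in $K$; symmetrically for $N$. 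For the top clause I would invoke OCT: given $\rho\in L$, OCT supplies an index $i$ with $\cantell(i,\rho)$. If $\rho\in K$, the first conjunct of $\cantell$ says no word of $L-K$ shares $P_i(\rho)$, i.e.\ $P_i(\rho)\in P_i(K)\setminus P_i(L-K)$, so $f_i(P_i(\rho))=Y$; the case $\rho\in L-K$ is symmetric and yields $f_i(P_i(\rho))=N$.

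The only real design decision---and the step I expect to be the crux---is getting the definition of $f_i$ exactly right so that both clauses hold simultaneously. Setting $f_i(s)=Y$ merely because some word of $K$ projects to $s$ would break the soundness clause whenever $s$ is also the projection of a bad word; the set-difference conditions (and the explicit ``undecided'' value $U$) are precisely what make the no-lying clause hold for \emph{every} observer and \emph{every} string, while OCT guarantees that for each behavior at least one observer lands in the unambiguous region. Once the definition is pinned down, both verifications reduce to short case analyses on whether $\rho\in K$ or $\rho\in L-K$.
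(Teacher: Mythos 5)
Your proposal is correct and follows essentially the same route as the paper: the same contradiction argument for ALTOCT $\implies$ OCT, and the same construction of $f_i$ (your $P_i(K)\setminus P_i(L-K)$ condition is literally the paper's definition~(\ref{eq_functions_aloct}) rewritten in set notation). Your verification of the no-lying clause directly from the definition of $f_i$, without the paper's case split on whether $\cantell(i,\rho)$ holds, is a small but genuine streamlining; otherwise the arguments coincide.
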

\begin{proof}
Suppose the condition of Theorem~\ref{thm_aloct} holds. We will show that
OCT holds. Pick some $\rho\in L$. 
By the first conjunct of Condition~\ref{eq_atleastonecantell}, there is
some $i\in\{1,...,n\}$ such that function $f_i$ satisfies the following:
\begin{eqnarray}
	\label{eq_conjunct1}
	\big(\rho\in K \implies f_i(P_i(\rho))=Y\big)
	& \land &
	\big(\rho\in L-K \implies f_i(P_i(\rho))=N \big)
\end{eqnarray}
We reason by cases:
\begin{itemize}
	\item $\rho\in K$: 
		Then $f_i(P_i(\rho))=Y$. We claim that
		$\cantell(i,\rho)$ holds. Suppose not. Then there exists
		$\rho'\in L-K$ such that $P_i(\rho)=P_i(\rho')$.
		So $f_i(P_i(\rho'))=Y$. But then, by the second conjunct of 
		Condition~\ref{eq_atleastonecantell}, we must have
\begin{eqnarray}
	\label{eq_conjunct2}
      \big(f_i(P_i(\rho'))=Y \implies \rho'\in K \big)
	& \land &
	\big( f_i(P_i(\rho'))=N \implies \rho'\in L-K \big)
\end{eqnarray}
		so $\rho'\in K$, which is a contradiction.
		Therefore $\cantell(i,\rho)$ holds. 
	\item $\rho\in L-K$: 
		Then $f_i(P_i(\rho))=N$. We claim that
		$\cantell(i,\rho)$ holds. Suppose not. Then there exists
		$\rho'\in K$ such that $P_i(\rho)=P_i(\rho')$.
		So $f_i(P_i(\rho'))=N$. But then, by the second conjunct of 
		Condition~\ref{eq_atleastonecantell}, we must again have
		(\ref{eq_conjunct2}),
		so $\rho'\in L-K$, which is a contradiction.
		Therefore $\cantell(i,\rho)$ holds. 
\end{itemize}
In both cases we have established $\cantell(i,\rho)$, which proves OCT.
Thus, the condition of Theorem~\ref{thm_aloct} implies OCT.

We now prove the converse, namely, that OCT implies 
the condition of Theorem~\ref{thm_aloct}. Suppose OCT holds.
In order to establish the condition of Theorem~\ref{thm_aloct} we need
to define total functions $f_i:\Sigma_i^*\to\{Y,N,U\}$ such that
Condition~\ref{eq_atleastonecantell} is satisfied. We define each $f_i$
as follows. Let $\sigma\in\Sigma_i^*$. Then:
\begin{eqnarray}
	\label{eq_functions_aloct}
	f_i(\sigma) = \left\{\begin{array}{l}
		Y \mbox{, if } \exists\rho\in K: P_i(\rho)=\sigma \land
				\forall\rho'\in L-K: P_i(\rho')\ne\sigma \\
		N \mbox{, if } \exists\rho\in L-K: P_i(\rho)=\sigma \land
				\forall\rho'\in K: P_i(\rho')\ne\sigma \\
	U \mbox{, otherwise. } 
	\end{array}
		\right.
\end{eqnarray}
It remains to show that the functions defined in~(\ref{eq_functions_aloct})
satisfy Condition~\ref{eq_atleastonecantell}. We prove each of the two
conjuncts of Condition~\ref{eq_atleastonecantell} separately.

For the first conjunct, pick some $\rho\in L$. 
We must find $i\in\{1,...,n\}$ such that (\ref{eq_conjunct1}) holds.
Pick an $i$ such that $\cantell(i,\rho)$ holds. We know that such an $i$
must exist, by OCT. We now claim that (\ref{eq_conjunct1}) holds:
\begin{itemize}
	\item $\rho\in K$: Then, by $\cantell(i,\rho)$, it must be the
		case that $\not\exists\rho'\in L-K:P_i(\rho)=P_i(\rho')$.
		So $\forall\rho'\in L-K:P_i(\rho)\ne P_i(\rho')$.
		Then, by (\ref{eq_functions_aloct}), $f_i(P_i(\rho))=Y$.
	\item $\rho\in L-K$: Then, by $\cantell(i,\rho)$, it must be the
		case that $\not\exists\rho'\in K:P_i(\rho)=P_i(\rho')$.
		So $\forall\rho'\in K:P_i(\rho)\ne P_i(\rho')$.
		Then, by (\ref{eq_functions_aloct}), $f_i(P_i(\rho))=N$.
\end{itemize}
This proves (\ref{eq_conjunct1}) and the first conjunct of Condition~\ref{eq_atleastonecantell}.

For the second conjunct of Condition~\ref{eq_atleastonecantell},
pick some $\rho\in L$ and some $i\in\{1,...,n\}$.
We must show that (\ref{eq_conjunct2}) holds.
We reason by cases:
\begin{itemize}
	\item $\cantell(i,\rho)$ holds: Then, by the same analysis as above
		we can show that either $\rho\in K$ and $f_i(P_i(\rho))=Y$,
		or $\rho\in L-K$ and $f_i(P_i(\rho))=N$, i.e.,
		that (\ref{eq_conjunct2}) holds.
	\item $\cantell(i,\rho)$ does not hold: Then we claim that
		$f_i(P_i(\rho))=U$. Notice that $\neg\cantell(i,\rho)$ is
		equivalent to
	$$
	\big(\rho\in K \land \exists\rho'\in L-K : P_i(\rho)=P_i(\rho') \big)
        \lor 
	\big(\rho\in L-K \land \exists\rho'\in K : P_i(\rho)=P_i(\rho')\big).
	$$
		Therefore, the first two cases of (\ref{eq_functions_aloct})
		do not hold, so it must be that $f_i(P_i(\rho))=U$.
		Then, (\ref{eq_conjunct2}) holds since both implications
		hold trivially.
\end{itemize}
In both cases, (\ref{eq_conjunct2}) holds. Thus we have shown that OCT
implies the condition of Theorem~\ref{thm_aloct}. This completes the
proof of the theorem.
\end{proof}

Theorem~\ref{thm_aloct} justifies the definition of OCT as a truly 
decentralized observation condition. Indeed, OCT holds if and only if
decentralized functions satisfying condition~(\ref{eq_atleastonecantell})
exist. In fact, we could equivalently have defined OCT to be the existence
of functions satisfying condition~(\ref{eq_atleastonecantell}). Then,
Definition~\ref{def_oct} could be seen as a necessary and sufficient condition
for {\em at least one can tell} observability.

\subsection{DES reformulations of OCT}
\label{sec_alt_reformulation}

This section explores the relationship between OCT and work in the supervisory control of discrete-event systems.  In the DES literature, alphabet symbols represent events and words over an alphabet represent sequences of events. In this section we adopt the convention used in the DES literature of denoting words by letters in the Roman alphabet $s, t, \dots$ and only denoting events by Greek letters, as opposed to the convention in theoretical computing of using Greek letters such as $\rho$ to also denote words.  We have preserved this difference in notation to highlight the point that there is resemblance between our work on OCT which can be situated entirely within theoretical computing (i.e., automata theory and formal languages) without reference to control-theoretic properties and existing work in the supervisory control of DES community.

There are decentralized control conditions within the discrete-event systems literature that bear some resemblance to OCT but are not the same.
Consider the following definition, which on first blush looks similar to a combination of {\em co-observability} \cite{RudieWonhamTAC92} and {\em D\&A co-observability} \cite {YooLafortuneJDEDS02}.    
As we will see in Theorem~\ref{lem_decent_lang_form},
it is actually  decomposability \cite{RudieWonhamPSTV90} (which, informally, says that whether a string is legal can be determined from the observations of legal strings) plus a counterpart to decomposability that roughly says that whether a string is illegal can be determined from the observations of illegal strings.

\begin{definition}[Discrete-Event Systems OCT]
	\label{def_decentobs}
	Given alphabet $\Sigma$ and subalphabets  $\Sigma_1, \Sigma_2,  \dots , \Sigma_n \subseteq\Sigma$, 
	and given regular languages $K\subseteq L\subseteq\Sigma^*$,
	we say that {\em the discrete-event systems OCT (DESOCT) condition} holds if 
	
		\begin{eqnarray}
		\label{eq_decentobs}
		\forall s, s_1, s_2, \ldots , s_n \in \Sigma^* :  
		\big( \bigwedge_{i=1, \ldots , n} P_i(s)=P_i(s_i) \big) & \implies &
		\Big( \big( (\bigwedge_{i=1, \ldots , n}s_i \in K \wedge s\in L) \implies s \in K \big) \nonumber \\
		& & \land  \label{eqn_decentobs}\\ 
		& & \big( (\bigwedge_{i=1, \ldots , n}s_i \in L-K  \wedge s\in L) \implies s \not\in K \big) \Big) \nonumber
	\end{eqnarray}
where $P_i:\Sigma^*\to\Sigma_i^*$ is the projection function onto $\Sigma_i$, for $i=1,...,n$.	
\end{definition}

It can be shown, via logical transformations, that  Definition~\ref{def_decentobs} is equivalent to OCT.

\begin{lemma}
\label{lem_DESOCT_OCT}
	The DESOCT condition is equivalent to the OCT condition.
\end{lemma}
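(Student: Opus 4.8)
The plan is to prove the equivalence by showing that the negation of DESOCT is logically equivalent to the negation of OCT, for which I already have a clean disjunctive form in Lemma~\ref{lem_negationOCT}. Working with negations is convenient here because DESOCT is stated as a universally quantified implication, so its negation is a clean existential statement that I can massage into the shape of conditions~(\ref{eqnnegOCTgood}) and~(\ref{eqnnegOCTbad}).

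First I would write out the negation of DESOCT explicitly. Negating~(\ref{eq_decentobs}) gives the existence of words $s, s_1, \ldots, s_n$ with $\bigwedge_i P_i(s) = P_i(s_i)$ such that the consequent fails, i.e. such that either the first inner implication fails or the second does. The first fails when all $s_i \in K$, $s \in L$, yet $s \notin K$ (so $s \in L-K$); the second fails when all $s_i \in L-K$, $s \in L$, yet $s \in K$. So the negation of DESOCT splits as a disjunction of two existential statements, and the plan is to match these two disjuncts against~(\ref{eqnnegOCTbad}) and~(\ref{eqnnegOCTgood}) respectively.

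The core observation making the match work is that in DESOCT, for each fixed $s$, the witnesses $s_1, \ldots, s_n$ are quantified independently but share the membership constraint (all in $K$, or all in $L-K$); crucially, each constraint $P_i(s) = P_i(s_i)$ involves only $s_i$, so a single tuple $(s_1,\ldots,s_n)$ with all $s_i \in K$ is exactly the same data as, for each $i$ separately, a word $\rho_i \in K$ with $P_i(\rho) = P_i(\rho_i)$ where $\rho := s$. In other words, ``there exists a tuple with all coordinates in $K$ matching $s$ coordinatewise'' is equivalent to ``for every $i$ there exists $\rho_i \in K$ matching $s$ in projection $i$,'' since the choices for different $i$ do not interact. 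I would state this splitting of an existentially-quantified tuple into a per-coordinate conjunction of existentials as the key logical step. Under the renaming $\rho := s$ and $\rho_i := s_i$, the first disjunct of $\neg$DESOCT (all $s_i \in K$, $s \in L-K$) becomes exactly~(\ref{eqnnegOCTbad}), and the second disjunct (all $s_i \in L-K$, $s \in K$) becomes exactly~(\ref{eqnnegOCTgood}).

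I expect the main obstacle to be purely bookkeeping rather than conceptual: I must be careful that DESOCT as written ranges $s$ over all of $\Sigma^*$ while its hypotheses force $s \in L$, so that after negation the witness $s$ genuinely lies in $L-K$ (resp.\ $K$) and not merely in $\Sigma^*$; and I must verify that the independence-of-coordinates step is a genuine logical equivalence in both directions (the forward direction, from a tuple to per-coordinate witnesses, is the projection of the tuple onto each coordinate; the backward direction, assembling a tuple from per-coordinate witnesses, uses that the constraints are separable). Once these two disjuncts are identified with~(\ref{eqnnegOCTgood}) and~(\ref{eqnnegOCTbad}), Lemma~\ref{lem_negationOCT} tells us their disjunction is exactly $\neg$OCT, so $\neg$DESOCT $\Leftrightarrow$ $\neg$OCT, and hence DESOCT $\Leftrightarrow$ OCT.
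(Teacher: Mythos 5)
Your proposal is correct and follows essentially the same route as the paper's proof: both negate DESOCT, simplify the failed consequent into a disjunction of two existential statements, and identify these with conditions~(\ref{eqnnegOCTgood}) and~(\ref{eqnnegOCTbad}) via Lemma~\ref{lem_negationOCT}. Your explicit justification of the coordinate-splitting step (that an existentially quantified tuple with separable constraints is equivalent to a per-coordinate conjunction of existentials) is a point the paper's final step leaves implicit, but it is not a different argument.
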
	
\begin{proof}

First, (\ref{eq_decentobs}) is equivalent to:
\begin{eqnarray}
		\label{eq_decentobs2}
		\forall s, s_1, s_2, \ldots , s_n \in \Sigma^* :  \big(\bigwedge_{i=1, \ldots , n} P_i(s)=P_i(s_i) \land s\in L\big)
	     & \implies &
	    \big( (\bigwedge_{i=1, \ldots , n}s_i \in K  \implies s \in K) \nonumber \\
		& & \land  \\ 
		& &   (\bigwedge_{i=1, \ldots , n}s_i \in L-K  \implies s \not\in K) \big) \nonumber
\end{eqnarray}
Next, (\ref{eq_decentobs2}) is equivalent to:
\begin{eqnarray}
		\label{eq_decentobs3}
		\forall s \in L, \forall s_1, s_2, \ldots , s_n \in \Sigma^* :  \bigwedge_{i=1, \ldots , n} P_i(s)=P_i(s_i)
		& \implies &
		\big( (\bigwedge_{i=1, \ldots , n}s_i \in K  \implies s \in K) \nonumber \\
		& & \land  \\ 
		& &  (\bigwedge_{i=1, \ldots , n}s_i \in L-K  \implies s \not\in K) \big) \nonumber
\end{eqnarray}
Next, (\ref{eq_decentobs3}) is equivalent to:
\begin{eqnarray}
		\label{eq_decentobs4}
		\forall s \in L, \forall s_1, s_2, \ldots , s_n \in \Sigma^* :  \bigwedge_{i=1, \ldots , n} P_i(s)=P_i(s_i)
		& \implies &
		\big( (\bigwedge_{i=1, \ldots , n}s_i \in K  \implies s \in K) \nonumber \\
		& & \land  \\ 
		& &  (\bigwedge_{i=1, \ldots , n}s_i \in L-K  \implies s \in L- K) \big) \nonumber
\end{eqnarray}
Now let's take the negation of  (\ref{eq_decentobs4}):
\begin{eqnarray}
		\label{eq_decentobs5}
		\exists s \in L, \exists s_1, s_2, \ldots , s_n \in \Sigma^* :  \bigwedge_{i=1, \ldots , n} P_i(s)=P_i(s_i)
		& \land &
		\big( \neg (\bigwedge_{i=1, \ldots , n}s_i \in K  \implies s \in K) \nonumber \\
		& & \lor  \\ 
		& &  \neg (\bigwedge_{i=1, \ldots , n}s_i \in L-K  \implies s \in L- K) \big) \nonumber
\end{eqnarray}
		
Then (\ref{eq_decentobs5}) is equivalent to:
\begin{eqnarray}
		\label{eq_decentobs6}
		\exists s \in L, \exists s_1, s_2, \ldots , s_n \in \Sigma^* :  \bigwedge_{i=1, \ldots , n} P_i(s)=P_i(s_i)
		& \land &
		\big( (\bigwedge_{i=1, \ldots , n}s_i \in K  \land s \not\in K) \nonumber \\
		& & \lor  \\ 
		& &  (\bigwedge_{i=1, \ldots , n}s_i \in L-K  \land s \not\in L- K) \big) \nonumber
\end{eqnarray}
Finally, (\ref{eq_decentobs6}) can be rewritten by observing that $s \in L$ and $s \not\in K$ is
equivalent to $s \in L- K$ (and similarly for $s\in L$ and $s \not\in L-K$):
\begin{eqnarray}
\label{eq_decentobs7}
\exists s \in L, \exists s_1, s_2, \ldots , s_n \in \Sigma^* :  \bigwedge_{i=1, \ldots , n} P_i(s)=P_i(s_i)
& \land &
\big( (\bigwedge_{i=1, \ldots , n}s_i \in K  \land s \in L - K) \nonumber \\
& & \lor  \\ 
& &  (\bigwedge_{i=1, \ldots , n}s_i \in L-K  \land s \in K) \big) \nonumber
\end{eqnarray}
We can see from Lemma~\ref{lem_negationOCT} that (\ref{eq_decentobs7}) is the negation of OCT.   Thus, Definition~\ref{def_decentobs} 
is equivalent to OCT.
\end{proof}

\bigskip
	
Definition~\ref{def_decentobs} can be rewritten as a combination of language inclusions, which will make it immediately apparent that DESOCT is decidable for regular language inputs.

\begin{definition}[Language OCT]
	\label{def_langOCT}
	Given alphabet $\Sigma$ and subalphabets  $\Sigma_1, \Sigma_2,  \dots , \Sigma_n \subseteq\Sigma$, 
	and given regular languages $K\subseteq L\subseteq\Sigma^*$,
	we say that {\em the language OCT (LANGOCT) condition} holds if
	\begin{eqnarray}
	& \bigcap\limits_{i=1, \dots , n} P_i^{-1}(P_i(K)) \cap L  \subseteq K \label{eq_subsetK}\\
	& \land  \nonumber \\
	& \bigcap\limits_{i=1, \dots , n} P_i^{-1}(P_i(L-K)) \cap L  \subseteq L-K \label{eq_subsetNotK}
	\end{eqnarray}
	
\end{definition}

\begin{theorem}  
	\label{lem_decent_lang_form}
	The DESOCT condition is equivalent to the LANGOCT condition.
 \end{theorem}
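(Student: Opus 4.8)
The plan is to prove the equivalence by showing that each conjunct of LANGOCT corresponds exactly to one disjunct of the negation of OCT (equivalently, to one of the two universally-quantified conjuncts inside DESOCT). Rather than manipulating DESOCT directly, I would work through the characterization of the negation of OCT given in Lemma~\ref{lem_negationOCT}, since Lemma~\ref{lem_DESOCT_OCT} already tells us DESOCT is equivalent to OCT. Concretely, I would show that the negation of the language inclusion (\ref{eq_subsetK}) is exactly the disjunct (\ref{eqnnegOCTbad}) of the negation of OCT, and that the negation of (\ref{eq_subsetNotK}) is exactly the disjunct (\ref{eqnnegOCTgood}). Establishing both then gives that LANGOCT fails iff OCT fails, hence LANGOCT $\iff$ OCT, and combining with Lemma~\ref{lem_DESOCT_OCT} yields the theorem.

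The key translation step is unpacking the inverse-projection intersections. For a fixed word $s$, membership $s \in \bigcap_{i} P_i^{-1}(P_i(K))$ means that for every $i$ we have $P_i(s) \in P_i(K)$, i.e.\ there exists some $s_i \in K$ with $P_i(s_i) = P_i(s)$. First I would state this unfolding as a small observation: $s \in P_i^{-1}(P_i(K))$ iff $\exists s_i \in K : P_i(s) = P_i(s_i)$, and symmetrically with $L-K$ in place of $K$. With that in hand, the failure of (\ref{eq_subsetK}), namely $\bigcap_i P_i^{-1}(P_i(K)) \cap L \not\subseteq K$, says precisely that there is some $s \in L$ with $s \notin K$ (so $s \in L-K$) such that for every $i$ there is an $s_i \in K$ matching $s$ under $P_i$. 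That is verbatim disjunct (\ref{eqnnegOCTbad}) after renaming $s$ to $\rho$. The same computation, with the roles of $K$ and $L-K$ swapped, shows that failure of (\ref{eq_subsetNotK}) is disjunct (\ref{eqnnegOCTgood}); here I would use that $s \in L$ together with $s \notin L-K$ is equivalent to $s \in K$, exactly the step already used to pass from (\ref{eq_decentobs6}) to (\ref{eq_decentobs7}).

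Assembling the pieces, the negation of LANGOCT is the disjunction of the two failures, which matches the disjunction of (\ref{eqnnegOCTgood}) and (\ref{eqnnegOCTbad}) — the negation of OCT by Lemma~\ref{lem_negationOCT}. Taking contrapositives gives LANGOCT $\iff$ OCT, and then Lemma~\ref{lem_DESOCT_OCT} gives LANGOCT $\iff$ DESOCT as required. I expect the only mild obstacle to be bookkeeping: making sure the intersection over $i$ of the inverse projections correctly produces the \emph{per-$i$ independent witnesses} $s_1,\dots,s_n$ (rather than a single common witness), since each $P_i^{-1}(P_i(K))$ chooses its own preimage. This is exactly the feature that distinguishes OCT from JO, as noted after Theorem~\ref{thm_JO_notimply_OCT}, so I would flag explicitly that intersecting the inverse projections, as opposed to inverse-projecting a single word, is what encodes the ``$\rho_1,\dots,\rho_n$ need not be the same'' structure. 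An alternative route would be to translate DESOCT directly into the inclusions by treating the two inner implications separately, but routing through Lemma~\ref{lem_negationOCT} keeps the correspondence cleanest and reuses work already done.
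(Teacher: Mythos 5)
Your proof is correct, but it takes a different route from the paper. The paper proves the equivalence of DESOCT and LANGOCT \emph{directly}: for one direction it takes $s$ in $\bigcap_i P_i^{-1}(P_i(K))\cap L$, unpacks the intersection into witnesses $s_1,\dots,s_n\in K$ with $P_i(s_i)=P_i(s)$, and applies the first implication in the consequent of~(\ref{eq_decentobs}) to get $s\in K$ (and symmetrically for $L-K$); for the converse it reverses the packaging. Your argument instead works contrapositively and routes through OCT: you show that the failure of~(\ref{eq_subsetK}) is literally disjunct~(\ref{eqnnegOCTbad}) of Lemma~\ref{lem_negationOCT} and the failure of~(\ref{eq_subsetNotK}) is disjunct~(\ref{eqnnegOCTgood}), conclude LANGOCT $\iff$ OCT, and then invoke Lemma~\ref{lem_DESOCT_OCT} to reach DESOCT. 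Both arguments hinge on the same key unfolding --- $s\in P_i^{-1}(P_i(K))$ iff there is a per-$i$ witness $s_i\in K$ with $P_i(s_i)=P_i(s)$ --- and your explicit remark that the intersection of inverse projections is what permits \emph{independent} witnesses $s_1,\dots,s_n$ is exactly the right point to flag. What your route buys is economy and a sharper statement: you get LANGOCT $\iff$ OCT in one step, which is what Theorem~\ref{thm_OCT-poly-time} actually needs, and you reuse Lemma~\ref{lem_negationOCT} rather than redoing the quantifier manipulation. What the paper's route buys is independence: its proof relates DESOCT and LANGOCT without presupposing Lemma~\ref{lem_DESOCT_OCT}, so the two DES-style formulations are tied together even for a reader who skips the OCT detour. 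There is no circularity in your version, since both lemmas you cite are established before and independently of this theorem.
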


\begin{proof}  
We will show that (\ref{eq_decentobs}) implies the conjunction (\ref{eq_subsetK}) $\land$ (\ref{eq_subsetNotK}) and vice versa.

\medskip

\noindent
\underline{(\ref{eq_decentobs}) implies (\ref{eq_subsetK}) $\land$ (\ref{eq_subsetNotK})}\\

	Consider $s\in \bigcap\limits_{i=1, \dots , n} P_i^{-1}(P_i(K)) \cap L $. Then $s\in L$ and $s\in P_i^{-1}(P_i(K))$, i.e., $P_i(s)\in P_i(K)$, for all $i=1,...,n$.
	This means that there exist $s_1, s_2, \ldots s_n\in K$ such that $P_i(s_i)=P_i(s)$ for all $i=1,...,n$.  By the first conjunct in the consequent of (\ref{eq_decentobs}), this means that $s \in K$.  Therefore, (\ref{eq_subsetK}) holds.

	Similarly, consider $s\in \bigcap\limits_{i=1, \dots , n} P_i^{-1}(P_i(L-K)) \cap L $. Then there exist $s_1, s_2, \ldots s_n$ such that $P_i(s_i)=P_i(s)$ and $s_1, s_2, \dots, s_n \in L-K$ and $s \in L$.  By the second conjunct in the consequent of (\ref{eq_decentobs}), this means that $s \in L-K$.  Therefore, (\ref{eq_subsetNotK}) holds.

\medskip

\noindent	
\underline{(\ref{eq_subsetK}) $\land$ (\ref{eq_subsetNotK}) implies (\ref{eq_decentobs})}\\

	Now consider $s, s_1, \ldots, s_n \in \Sigma^*$ that satisfy the antecedent of (\ref{eq_decentobs}), i.e., $P_i(s)=P_i(s_i)$ for $i= 1, \ldots , n$.   First, suppose that for all $i= 1, \ldots , n$, $s_i \in K$ and $s \in L$.  Since, for all $i=1, \ldots , n$,  $P_i(s)=P_i(s_i)$ and $s_i \in K$, this means that $s \in P_i^{-1}(P_i(K))$.   Since $s \in \bigcap\limits_{i=1, \dots , n} P_i^{-1}(P_i(K))$ and $s \in L$, by (\ref{eq_subsetK}), we have $s \in K$.

	Second, suppose that for all $i= 1, \ldots , n$, $s_i \in L-K$ and $s \in L$.  Since, for all $i=1, \ldots , n$,  $P_i(s)=P_i(s_i)$ and $s_i \in L-K$, this means that $s \in P_i^{-1}(P_i(L-K))$.   Since $s \in \bigcap\limits_{i=1, \dots , n}P_i^{-1}(P_i(L-K))$ and $s \in L$, by (\ref{eq_subsetNotK}), we have $s \in L-K$.
\end{proof}

\section{Decidability, computational complexity, finite implementation}

We now show that OCT is decidable for regular languages and we provide an asymptotic computational complexity analysis. 
We also show that, when the OCT condition is met, there exists a finite implementation.

\subsection{OCT decidability and computational complexity}
\label{sec_OCT_decidable}

	Using the formulation of OCT given by LANGOCT ((\ref{eq_subsetK}) and (\ref{eq_subsetNotK})), 
	we can demonstrate decidability of OCT. In addition, we will show that OCT can be decided in  time $O(p^{n+2} \cdot m^{n+1})$ where $n$ is the number of agents, and $p$ and $m$ are the number of states of the automata recognizing the languages $L$ and $K$, respectively.

	\begin{theorem}
		\label{thm_OCT-poly-time}
		If $K$ and $L$ are regular languages, OCT is decidable.  Moreover, if $K$ and $L$ are recognized by DFAs
	    whose state sets have cardinality $m$ and $p$, respectively, then for $n$ agents, OCT is decidable in 
		time $O(p^{n+2} \cdot m^{n+1})$.
	\end{theorem}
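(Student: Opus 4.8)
The plan is to leverage the language-theoretic characterization of OCT rather than Definition~\ref{def_oct} directly. By Lemma~\ref{lem_DESOCT_OCT} and Theorem~\ref{lem_decent_lang_form}, OCT holds if and only if the two inclusions (\ref{eq_subsetK}) and (\ref{eq_subsetNotK}) of the LANGOCT condition (Definition~\ref{def_langOCT}) hold. Each of these is an inclusion between regular languages: $K$, $L$, and $L-K$ are regular; projection and inverse projection preserve regularity (as recalled in Section~\ref{sec_preliminaries}); and regular languages are closed under finite intersection and set difference. Since inclusion between regular languages is decidable (equivalently, emptiness of $X \cap (\Sigma^*\setminus Y)$ is decidable), both conditions are decidable, and hence so is OCT. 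This settles the first, qualitative, part of the statement.

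For the complexity bound I would construct explicit automata for the two inclusions and decide each by testing emptiness of an intersection with a complemented target. The key to avoiding an exponential blowup is to \emph{never determinize}: for any language $M$ recognized by an automaton with $k$ states, $P_i^{-1}(P_i(M))$ is recognized by an NFA (with $\epsilon$-transitions) having exactly $k$ states, obtained by relabelling every transition on a letter of $\Sigma\setminus\Sigma_i$ to $\epsilon$ (projection) and then adding a self-loop on every letter of $\Sigma\setminus\Sigma_i$ at every state (inverse projection), exactly as described in Section~\ref{sec_preliminaries}. Determinism is required only where we must complement, namely for the target languages.

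With this building block, the counting is straightforward. For inclusion (\ref{eq_subsetK}), each $P_i^{-1}(P_i(K))$ has $m$ states, so their $n$-fold product has $m^n$ states; intersecting with the DFA for $L$ ($p$ states) and with the complement DFA of $K$ ($O(m)$ states) yields a product automaton with $O(p\,m^{n+1})$ states, whose emptiness decides (\ref{eq_subsetK}). For inclusion (\ref{eq_subsetNotK}), the language $L-K = L \cap (\Sigma^*\setminus K)$ is recognized by a DFA with $O(pm)$ states, so each $P_i^{-1}(P_i(L-K))$ is an NFA with $O(pm)$ states, their $n$-fold product has $O((pm)^n)$ states, and intersecting with $L$ ($p$ states) and with the complement DFA of $L-K$ ($O(pm)$ states) yields a product automaton with $O((pm)^n \cdot p \cdot pm) = O(p^{n+2}m^{n+1})$ states. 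Emptiness is linear in the number of states and transitions, and since $\Sigma$ is fixed the transition count is within a constant factor of the state count; thus checking (\ref{eq_subsetNotK}) costs $O(p^{n+2}m^{n+1})$ time, which dominates the $O(p\,m^{n+1})$ cost of (\ref{eq_subsetK}) and gives the claimed bound.

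The main obstacle is not conceptual but lies in the bookkeeping of the complexity analysis: one must track exactly which of $K$, $L$, and $L-K$ contributes each factor, and in particular resist determinizing the projected languages, which would incur an exponential penalty. The crucial observations are that $P_i^{-1}(P_i(\cdot))$ preserves the state count when implemented on NFAs, that $L-K$ costs a factor $pm$ rather than $p$, and that complementation is confined to the already-deterministic targets $K$ and $L-K$. Together these yield the asymmetric exponents in $p^{n+2}m^{n+1}$, with the illegal-string inclusion (\ref{eq_subsetNotK}) being the dominant term.
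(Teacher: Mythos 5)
Your proposal is correct and follows essentially the same route as the paper's proof: reduce OCT to the two LANGOCT inclusions via Lemma~\ref{lem_DESOCT_OCT} and Theorem~\ref{lem_decent_lang_form}, build NFAs for the intersections of inverse projections without determinizing, complement only the DFA targets $K$ and $L-K$, and check emptiness of the resulting products, with the same state counts and the same dominant term $O(p^{n+2}\cdot m^{n+1})$ coming from condition~(\ref{eq_subsetNotK}).
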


\begin{proof}
Decidability follows from Lemma~\ref{lem_DESOCT_OCT}, Theorem~\ref{lem_decent_lang_form}, 
	and the fact that the operations of projection, inverse projection, intersection, and checking set containment are decidable for regular languages.

	Let us now examine the computational complexity. 
	Let $K$ be recognized by a DFA $A_K$ with $m$ states and let $L$ be recognized by a DFA $A_L$ with $p$ states. 
	We can compute a DFA $A_{L-K}$ recognizing $L-K$ by noting that $L-K = L\cap \overline{K}$,  where 
	$\overline{K}$ denotes the complement of set $K$.\footnote{
	In the DES literature the overbar notation is used to denote prefix-closure. Here, we are adopting the standard math convention of using overbar to denote set complement.
	}
	$\overline{K}$ is recognized by a DFA $A_{\overline{K}}$ with exactly the same states as $A_K$, as it suffices to turn the accepting states of $A_K$ into rejecting states and vice-versa. Then, $A_{L-K}$ can be built as the Cartesian product of $A_L$ and $A_{\overline{K}}$~\cite{HopcroftUllman79}. The number of states of $A_{L-K}$ is $p\cdot m$.
	
	Consider first condition (\ref{eq_subsetK}).
	The language $P_i(K)$ 
	is recognized by the same automaton that recognizes $K$ but with all events not in $\Sigma_i$ replaced by $\varepsilon$. This process can be completed in $O(m)$ time and the number of states of the resulting automaton is still $m$. The inverse projection $P_i^{-1}(P_i(K))$ is achieved by adding in self-loops of events not in $\Sigma_i$ to the resulting automaton. This can be done in $O(m)$ time and the number of states of the resulting automaton is still $m$. The intersection of the $n$ terms $P_i^{-1}(P_i(K))$, for $i=1, \dots n$, together with $L$ can be done in $O(m^n \cdot p)$ time, since intersection of automata can be represented with an automaton whose state space is the Cartesian product of the constituent automata. The result is an NFA $A_1$ with $m^n \cdot p$ states, such that
	$L(A_1) = \bigcap\limits_{i=1, \dots , n} P_i^{-1}(P_i(K))\cap L$. For condition (\ref{eq_subsetK}) we need to check whether $L(A_1) \subseteq K$. This is equivalent to checking $L(A_1) \cap \overline{K} = \emptyset$, which in turn amounts to computing the product of $A_1$ with $A_{\overline{K}}$.\footnote{
	It is well-known that checking subset inclusion for languages $L_1$ and $L_2$ can done in polynomial time when $L_1$ is represented by an NFA and $L_2$ by a DFA. A table in~\cite{Clemente2020OnTC} provides a nice summary of the computational complexity of checking subset inclusion for various different automata representations of $L_1$ and $L_2$.
	}
	Therefore, the overall time
	complexity of checking condition (\ref{eq_subsetK}) is $O(m^n \cdot p\cdot m) = O(m^{n+1} \cdot p)$.
	
	Next, consider condition (\ref{eq_subsetNotK}). Reasoning similarly as above, we first construct an NFA $A_2$ such that
	$L(A_2) = \bigcap\limits_{i=1, \dots , n} P_i^{-1}(P_i(L-K))\cap L$. This can be done in $O((p\cdot m)^n \cdot p)$ time and the resulting automaton $A_2$ has $(p\cdot m)^n \cdot p$ states. It remains to check whether $L(A_2)\subseteq L-K$, which is equivalent to checking $L(A_2)\cap \overline{L-K} = \emptyset$. Recall that $L-K$ is recognized by DFA $A_{L-K}$ with $p\cdot m$ states. Because $A_{L-K}$ is deterministic, the complement $\overline{L-K}$ is also recognized by a DFA $A_{\overline{L-K}}$ with exactly the same states. Therefore, checking $L(A_2)\subseteq L-K$ can be done by building the product of $A_2$ with $A_{\overline{L-K}}$, where this product has $(p\cdot m)^n \cdot p \cdot (p\cdot m)$ states. Thus, the overall complexity for checking condition (\ref{eq_subsetNotK}) is $O\big((p\cdot m)^n \cdot p \cdot (p\cdot m) \big)= O(p^{n+2} \cdot m^{n+1})$.
	
	We can see that the complexity of condition (\ref{eq_subsetNotK}) is higher than that of condition (\ref{eq_subsetK}), therefore, the overall complexity of OCT is $O(p^{n+2} \cdot m^{n+1})$.
\end{proof}

\subsection{Finite-state OCT observers}

The functions $f_i$ in Theorem~\ref{thm_aloct} can be seen as decentralized
observers, each outputting $Y,N,U$ depending on whether they can tell whether
the original behavior $\rho$ was in $K$, not in $K$, or unknown.
Each of these functions takes as input the entire projection $P_i(\rho)$ which
is generally unbounded in length. So a brute-force implementation of these
functions requires unbounded memory. In this section, we show that we can
also implement these functions using finite memory.

A {\em finite-state observer} is a DFA $O_i$ over
subalphabet $\Sigma_i$, such that the states of $O_i$ are labeled by one
of $Y,N,U$, corresponding to the three observation outcomes of function $f_i$.
The requirement is that for every $\sigma\in\Sigma_i^*$, the label of the
state that $O_i$ ends up in after reading $\sigma$ is exactly $f_i(\sigma)$.
Note that $O_i$ is deterministic, so for a given $\sigma$ there is a unique
state that $O_i$ ends up in after reading $\sigma$.

\begin{theorem}
	\label{thm_finite-state-observer}
If OCT holds, then finite-state observers as above exist.
\end{theorem}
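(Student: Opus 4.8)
The plan is to build, for each $i$, a finite-state observer $O_i$ that realizes precisely the function $f_i$ defined by~(\ref{eq_functions_aloct}) in the proof of Theorem~\ref{thm_aloct}. The first and key observation is that the three-way value of $f_i(\sigma)$ depends only on two Boolean facts: whether $\sigma\in P_i(K)$ and whether $\sigma\in P_i(L-K)$. Indeed, unwinding~(\ref{eq_functions_aloct}), $f_i(\sigma)=Y$ exactly when $\sigma\in P_i(K)$ and $\sigma\notin P_i(L-K)$; $f_i(\sigma)=N$ exactly when $\sigma\in P_i(L-K)$ and $\sigma\notin P_i(K)$; and $f_i(\sigma)=U$ in the two remaining cases (both memberships hold, or neither does). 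So realizing $f_i$ reduces to simultaneously deciding membership of $\sigma$ in the two languages $P_i(K)$ and $P_i(L-K)$.

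Second, I would invoke regularity. Since $K$ and $L$ are regular, so is $L-K$, and by the closure of regular languages under projection recalled in Section~\ref{sec_preliminaries}, both $P_i(K)$ and $P_i(L-K)$ are regular languages over $\Sigma_i$. Determinizing (and completing) the corresponding automata yields total DFAs $B_i$ and $C_i$ over $\Sigma_i$ recognizing $P_i(K)$ and $P_i(L-K)$, respectively. I then define $O_i$ as the product (Cartesian product) automaton of $B_i$ and $C_i$, which is again a DFA over $\Sigma_i$. Each product state is a pair $(q,q')$, and I label it $Y$ if $q$ is accepting in $B_i$ and $q'$ is non-accepting in $C_i$; $N$ if $q'$ is accepting and $q$ is non-accepting; and $U$ otherwise.

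The correctness argument is then immediate from determinism: after reading any $\sigma\in\Sigma_i^*$, the DFA $O_i$ is in the unique state $(q,q')$ where $q$ is the state of $B_i$ reached on $\sigma$ (accepting iff $\sigma\in P_i(K)$) and $q'$ is the state of $C_i$ reached on $\sigma$ (accepting iff $\sigma\in P_i(L-K)$). By construction the label of this state matches the case analysis of the previous paragraph, so it equals $f_i(\sigma)$. Hence $O_i$ is a finite-state observer for $f_i$ in the required sense.

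Finally, a remark on where OCT enters. The finite-state realizability argument above does not itself use OCT; each $f_i$ is total and regular-definable regardless. What OCT contributes, through Theorem~\ref{thm_aloct}, is that these particular $f_i$ satisfy condition~(\ref{eq_atleastonecantell}), so that the collection $O_1,\dots,O_n$ constitutes genuine non-lying decentralized observers with the \emph{at least one can tell} guarantee. I expect no serious obstacle here; the only care needed is the bookkeeping of determinizing and completing the projected automata (projection naturally produces NFAs with $\varepsilon$-transitions) before forming the product, and checking that the single label $U$ correctly covers both the ``both memberships hold'' and ``neither holds'' situations of~(\ref{eq_functions_aloct}).
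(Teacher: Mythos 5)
Your construction is exactly the paper's: form the product of (complete, deterministic) automata for $P_i(K)$ and $P_i(L-K)$ and label each product state $Y$, $N$, or $U$ according to which component is accepting. Your correctness argument is a slightly streamlined version of the paper's case analysis --- you observe directly that $f_i(\sigma)$ from~(\ref{eq_functions_aloct}) is a function of the two membership bits $\sigma\in P_i(K)$ and $\sigma\in P_i(L-K)$, which the product tracks by determinism --- and this is sound, so the proposal is correct and essentially the same as the paper's proof.
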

\begin{proof}
We build $O_i$ as follows:
\begin{itemize}
	\item Let $P_i(K)$ be the projection of regular language $K$ onto
		$\Sigma_i$. $P_i(K)$ is a regular language. Let $A_1$ be
		a DFA recognizing $P_i(K)$. Without loss of generality we
		can assume that $A_1$ is complete, meaning there is a
		transition from every state of $A_1$ for every input letter.
	\item Similarly, let $A_2$ be a DFA recognizing $P_i(L-K)$.
	\item Both $A_1,A_2$ are DFA over the same alphabet $\Sigma_i$.
		Let $A$ be the synchronous product of $A_1$ and $A_2$.
		Synchronous product means that each transition of $A$
		corresponds to a pair of transitions for each of $A_1,A_2$,
		labeled with the same letter.
	\item A state $q$ of $A$ is a pair of states $(q_1,q_2)$, where $q_1$
		is a state of $A_1$ and $q_2$ is a state of $A_2$. We label
		$q$ as follows:
		\begin{itemize}
			\item $q$ is labeled with $Y$ if $q_1$ is accepting
				and $q_2$ is rejecting.
			\item $q$ is labeled with $N$ if $q_1$ is rejecting
				and $q_2$ is accepting.
			\item $q$ is labeled with $U$ otherwise.
		\end{itemize}
\end{itemize}
We claim that $O_i$ as constructed above satisfies our requirements
for a correct local observer.
To see this, suppose that OCT holds and consider some 
$\sigma \in \Sigma_i^*$.
We distinguish cases:

\begin{itemize}
\item
There is some $\rho\in L$ such that $\sigma=P_i(\rho)$.
We distinguish subcases:
\begin{itemize}
	\item $\rho\in K$: Suppose that $O_i$, after reading the input word $\sigma$, ends up in a state $q=(q_1,q_2)$.
	Because $\rho\in K$, we have $\sigma\in P_i(K)$, so $q_1$ is accepting.
		There are two subcases:
		\begin{itemize}
			\item $q_2$ is accepting: This means that $\sigma$ is
				also in $P_i(L-K)$. So there must exist some
				$\rho'\in L-K$ such that $P_i(\rho')=\sigma$,
				so $P_i(\rho')=P_i(\rho)$. This means that
				$\cantell(i,\rho)$ does not hold and that
				$f_i(\sigma)$ must be equal to $U$.
				Indeed, $q$ is also labeled $U$.
			\item $q_2$ is rejecting: This means that 
				$\sigma\not\in P_i(L-K)$, which implies that
				$\forall\rho'\in L-K:P_i(\rho')\ne\sigma$,
				$\forall\rho'\in L-K:P_i(\rho')\ne P_i(\rho)$.
				This means that
				$\cantell(i,\rho)$ holds and that
				$f_i(\sigma)$ must be equal to $Y$.
				Indeed, $q$ is also labeled $Y$.
		\end{itemize}
	\item  $\rho\in L-K$: the analysis is similar to the case $\rho\in K$ and is omitted.
\end{itemize}
\item There is no $\rho\in L$ such that $\sigma=P_i(\rho)$.
	Then, observe that Condition~(\ref{eq_atleastonecantell}) does not impose any restrictions on what $f_i(\sigma)$ can be, meaning that function $f_i$ can return any of the three symbols $Y,N,U$.
	As a result, any behavior of $O_i$ on $\sigma$ is  acceptable.
\end{itemize}
\end{proof}

	We see in the construction in the proof of Theorem~\ref{thm_finite-state-observer} that we produce DFA recognizing $P_i(K)$ and $P_i(L-K)$. While we have not proven that a polynomial-time algorithm for finite-state observers does not exist, we speculate that it does not since the first conjunct in the consequent of (\ref{eq_decentobs}) is the same expression that appears in co-observability and producing finite-state supervisors for co-observable systems cannot be done in polynomial time~\cite{RudieWillemsTAC95}.
	The result from Theorem~\ref{thm_OCT-poly-time} and the proof from Theorem~\ref{thm_finite-state-observer}
	are in keeping with results in the field of DES where, if the number of agents is fixed, verification of the necessary and sufficient conditions for supervisory control solutions to exist can be decided in polynomial time in the size of the state sets but where, even when the conditions are satisfied, the synthesis of corresponding supervisors cannot be done in polynomial time (cf., \cite{TsitsiklisMCSS1989} for centralized control and \cite{RudieWillemsTAC95} for decentralized control).

\section{Related Work}

Decentralized observation problems similar to the ones we examine here are also studied in~\cite{TripakisCDC05}.
In particular, the so-called {\em local} observation problems defined in~\cite{TripakisCDC05} are similar in spirit to the ALTOCT condition, but with a crucial difference. In ALTOCT, the local decision functions $f_i$ can each return three possible values, $Y,N,U$, whereas in the local observation problems defined in~\cite{TripakisCDC05}, the local decision functions $f_i$ are only allowed two possible return values, $0$ or $1$. Another difference is that the local observation problems defined in~\cite{TripakisCDC05} include a global combination function $B$ which can be any Boolean function, whereas in our setting of OCT and ALTOCT, the combination is implicitly disjunction: if at least one local observer says $Y$ then this is enough to ensure that the behavior $\rho$ of the plant was good, and if at least one local observer says $N$ then we can be sure that $\rho$ was bad. (By definition, it is impossible to have the case where some observer says $Y$ and another says $N$.)

Decentralized control problems under partial observation are investigated in \cite{CieslakEtAlTAC88,RudieWonhamTAC92,YooLafortuneJDEDS02}. For decentralized discrete-event systems problems that require control, problem solutions require that the agents' observations together with their control capabilities are enough to effect the necessary control. Co-observability and other variations (such as D \& A co-observability~\cite{YooLafortuneJDEDS02}) differ from joint observability and the one can tell condition in the following way.  Co-observability roughly says ``based on what sequence of events has occurred so far, can decentralized supervisors know enough about an upcoming event to know whether to prevent it from occurring''. Together with controllability, co-observability ensures that decentralized supervisory control problems can be solved because for any event that could lead somewhere illegal, at least one agent that \emph{can} stop that event from occuring knows enough from the agent's observations to do so.  As such, co-observability and conditions like it typically are expressed in the form ``for all $s$ that is in $K$ \ldots if $P(s)=$ \ldots and some conditions on string $s$ and on event $\sigma$, then $ s \sigma \in K$''.   The conditions on $s$ and $\sigma$ capture what is necessary and sufficient to determine if $s \sigma$ is in $K$ (or in $L-K$ for D \& A co-observability). These conditions are used so that supervisors make decisions about each upcoming event $\sigma$ and enact control over $\sigma$.  In contrast, joint observability, one can tell and decentralized observability, are divorced from a particular control problem at hand. Rather, they speak to whether a string that has already occurred is or isn't in $K$ and whether or not decentralized agents are able to determine that.

The construction in \cite{RudieWillemsTAC95} used to show that co-observability can be verified in polynomial time also uses a product construction that bears some resemblance to the one used to show that OCT is decidable. In both constructions, the paths through the states in the Cartesian product produce the strings that serve as a violation of the relevant condition (OCT here and co-observability in \cite{RudieWillemsTAC95}). In \cite{RudieWillemsTAC95}, to check co-observability you must check strings $s \sigma$, $s' \sigma$ and $s'' \sigma$ and determine if $s$ is in $L$ before you can even check if $s \sigma \in K$.  So the product automaton in \cite{RudieWillemsTAC95} requires an extra (n+2)th element in its Cartesian product that captures the states of $L$, which is not needed here.  In \cite{RudieWillemsTAC95} it is not assumed that the FSA representing languages are complete so instead at each move it is necessary to check if there is a transition on $\sigma$. An additional dump state $d$ is added to the product there; the purpose of $d$ is to keep track of whether or not co-observability is violated by the strings thus far generated; in contrast, in our construction here we take the $n$ automata $A^1_{L-K} \dots A^n_{L-K}$ to be recognizers of $L-K$ and not $K$ and so a violation of OCT is determined by the presence of strings that land at a terminal state of the product.

\section{Conclusions}

We have shown that checking decentralized observability amounts to ensuring that at least one agent can tell if a word is legal or not. This condition is decidable for regular languages and if it is satisifed then finite-state observers can be constructed (albeit not likely in polynomial time).

% \bibliographystyle{plain}
% \bibliography{./biblio}

\end{document}